\newtheoremstyle{plain}
  {}   				
  {}   				
  {\itshape}  
  {}       		
  {\mdseries\scshape} 
  {.}         
  { } 				
  {\thmname{#1}\thmnumber{ #2}\ifx#3\empty\else\ (#3)\fi}
\theoremstyle{plain}
\newtheorem{theorem}{\underline{Theorem}}
\newtheoremstyle{definition}
  {}   				
  {}   				
  {}  				
  {}      		
  {\mdseries\scshape} 
  {.}         
  { } 				
  {\thmname{#1}\thmnumber{ #2}\ifx#3\empty\else\ (#3)\fi}
\theoremstyle{definition}
\newtheorem{definition}[theorem]{\underline{Definition}}
\newtheorem{example}[theorem]{\underline{Example}}
\newtheorem{remark}[theorem]{\underline{Remark}}
\newcommand{\<}{\langle}
\renewcommand{\>}{\rangle}
\renewcommand{\[}{\left[}
\renewcommand{\]}{\right]}
\newcommand\Eb{\mathds{E}}
\newcommand\Fb{\mathds{F}}
\newcommand\Pb{\mathds{P}}
\newcommand\Rb{\mathds{R}}
\newcommand\Nb{\mathds{N}}
\newcommand\Ac{\mathscr{A}}
\newcommand\Fc{\mathscr{F}}
\newcommand\Om{\Omega}
\newcommand\sig{\sigma}
\newcommand\gam{\gamma}
\newcommand\Gam{\Gamma}
\newcommand\lam{\lambda}
\newcommand\del{\delta}
\newcommand\Del{\Delta}
\renewcommand\phi{\varphi}
\newcommand\pb{\bar{p}}
\newcommand\Ebt{\widetilde{\Eb}}
\newcommand\Pbt{\widetilde{\Pb}}
\newcommand\Mt{\widetilde{M}}
\newcommand\Wt{\widetilde{W}}
\newcommand\Nt{\widetilde{N}}
\renewcommand\d{\partial}
\newcommand{\ind}{\perp \! \! \! \perp}
\newcommand\ii{\mathtt{i}}
\newcommand\dd{\mathrm{d}}
\newcommand\ee{\mathrm{e}}
\newcommand{\eg}{e.g.}
\begin{document}

\title{A primer on perpetuals}

\author{
Guillermo Angeris\thanks{Bain Capital Crypto. \textbf{e-mail}: \url{gangeris@baincapital.com}.}
\and
Tarun Chitra\thanks{Gauntlet Networks, Inc. \textbf{e-mail}: \url{tarun@gauntlet.network}.}
\and
Alex Evans\thanks{Bain Capital Crypto. \textbf{e-mail}: \url{aevans@baincapital.com}.}
\and
Matthew Lorig\thanks{Department of Applied Mathematics, University of Washington. \textbf{e-mail}: \url{mlorig@uw.edu}.}
}
\date{This version: \today}
\maketitle


\begin{abstract}
We consider a continuous-time financial market with no arbitrage and no transactions costs.
In this setting, we introduce two types of perpetual contracts, one in which the payoff to the long side is a fixed function of the underlyers and the long side pays a funding rate to the short side, the other in which the payoff to the long side is a fixed function of the underlyers times a discount factor that changes over time but no funding payments are required.  Assuming asset prices are continuous and strictly positive, we derive model-free expressions for the funding rate and discount rate of these perpetual contracts as well as replication strategies for the short side.  When asset prices can jump, we derive expressions for the funding and discount rates, which are semi-robust in the sense that they do not depend on the dynamics of the volatility process of the underlying risky assets, but do depend on the intensity of jumps under the market's pricing measure.   When asset prices can jump and the volatility process is independent of the underlying risky assets, we derive an explicit replication strategy for the short side of a perpetual contract.
Throughout the paper, we illustrate through examples how specific perpetual contracts relate to traditional financial instruments such as variance swaps and leveraged exchange traded funds.
\end{abstract}

%
%

\section{Introduction}
\label{sec:introduction}
A \textit{perpetual contract} (often just \emph{perp}, for short) is a type of
financial contract that enables relatively general payoffs. At a high level,
a perp contract can be described as follows: two parties, which we will call the
\textit{long side} and the \textit{short side}, enter into an agreement. The short side agrees to
pay the long side some payoff, which is a function of the prices of the underlying assets,
at a time of the long side's choosing. In exchange for this, the long side pays a continual
cash-flow to the short side up until contract termination. This cash
flow can be implemented in two distinct ways.  First, it can be implemented
directly as a literal cash flow, where the long side pays the short side cash
at fixed time increments.
Second, a sometimes more
practical approach is instead to replace the cash flow by discounting the payoff at contract termination.
\\[0.5em]
Perps were first suggested in~\cite{shiller1993} as a way of
approximately measuring the prices of dividend-yielding assets and also as a tool
to hedge certain illiquid assets.  But, only the special case where the
payoff function was linear in the price of the underlying assets was
considered. Perps with linear payoffs later gained widespread popularity as a way of taking
leveraged bets on cryptocurrency markets, where common derivatives markets
were initially relatively illiquid, if available at all. As of 2022,
perps are some of the most actively-traded cryptocurrency derivatives, with daily
volume in the tens of billions of dollars (see, \eg,~\cite{coingeckoperp}). 
\\[0.5em]
While perpetual futures never gained traction outside of cryptocurrencies, they were
introduced as a convenient way for miners (who produce newly minted tokens or coins) to
hedge inherent risks in cryptocurrency production.
The two main risks that miners of currencies such as Bitcoin and Ethereum face are that
their future income (which can be viewed as a dividend-yielding stream) is randomized with
variance depending on their resource contribution to the network~(\cite{lewenberg2015bitcoin}).
To reduce this variance, miners used two tactics: mining pools~(\eg~pooling together
resources and distributing dividends pro-rata) and futures contracts.
Initially, miner perpetual futures contracts were over-the-counter quanto futures, where
miners took premiums in cash with strike prices struck denominated in Bitcoin.
The failure of the early quanto derivatives market led to the creation of stablecoins
(dollar-pegged demand deposit assets) which then naturally led to the creation of the
crypto perpetuals market in 2016~(\cite{alexander2020bitmex}).
\\[0.5em]
Recently, perps with payoffs that are proportional to some power of
the price of an asset have been proposed (c.f.,~\cite{llllvvuu2021, white2021}). This extension
led to the creation of a decentralized perps protocol on the Ethereum
blockchain called Squeeth (short for ``Squared ETH''), see~\cite{squeeth2021}.
This protocol allows users to take long (and short) positions on these \textit{power
perps} without requiring an intermediary such as a broker or exchange.
\\[0.5em]
The role of this paper is to clearly define perps, show a number of natural
generalizations to those known in the literature, and correct some of the misinformation that exists online 
as to how the rate of cash-flow should be computed in a no-arbitrage setting.  
The rest of the paper proceeds as follows: in Section~\ref{sec:model} we introduce a financial market
in which risky assets have continuous price paths. Next, in
Section~\ref{sec:perp-1}, we define a perpetual contract in which the long side
must pay a funding rate to the short side.  We derive a model-free expression
for the funding rate as well as a replication strategy for the short side. In
Section~\ref{sec:perp-2} we define a second type of perpetual contract in which
no funding payments are required but the payoff is discounted over time. We derive a
model-free expression for the discount rate as well as a replication strategy for
the short side. Lastly, in Section~\ref{sec:jumps}, we consider a market with a
single risky asset whose value may jump.  In this setting, we derive
expressions for the funding and discount rates of the two types of perpetual
contracts introduced in Sections~\ref{sec:perp-1} and~\ref{sec:perp-2}.  And,
under the assumption of an independent volatility process, we derive a
replication strategy for the short side of a perpetual contract.

\section{Market model and assumptions}
\label{sec:model}
In Sections~\ref{sec:model}, \ref{sec:perp-1}, and~\ref{sec:perp-2}, we
consider a continuous-time financial market, defined on a filtered
probability space $(\Om,\Fc,\Fb,\Pb)$, with no arbitrage and no transaction
costs.  The filtration $\Fb = (\Fc_t)_{t \geq 0}$ represents the history of the
market and $\Pb$ represents the real-world probability measure.  We suppose
that the market contains a risk-free \textit{money market account}, whose value
in dollars is denoted by $M = (M_t)_{t \geq 0}$, as well as $n \in \Nb$
\textit{risky assets} (typically, tokens or cryptocurrencies), whose values in
dollars are denoted by
$S = (S_t^{(1)}, S_t^{(2)}, \ldots, S_t^{(n)})_{t \geq 0}$.
\\[0.5em]
We assume the value of the money market account $M$ is continuous, strictly positive and non-decreasing.  As such, there exists a non-negative $\Fb$-measurable process $r = (r_t)_{t \geq 0}$, known as the \textit{risk-free rate}, such that
\begin{align}
\dd M_t
	&=	r_t M_t \dd t , &
M_0
	&\geq 0 .
\end{align}
We further assume that the prices of the risky assets are continuous and strictly positive.  As such, there exists an $\Rb^n$-valued $\Fb$-measurable \textit{drift vector} $\mu = (\mu_t^{(1)},\mu_t^{(2)}, \ldots, \mu_t^{(n)})$ and an $\Rb_+^{n \times d}$-valued $\Fb$-measurable \textit{volatility matrix} $\sig = (\sig_t^{(1,1)}, \sig_t^{(1,2)}, \ldots, \sig_t^{(n,d)})$ with $d \in \Nb$, such that, for every $i$, the value $S^{(i)}$ of the $i$th risky asset is given by
\begin{align}
\dd S_t^{(i)}
	&=	\mu_t^{(i)} S_t^{(i)} \dd t + \sum_{j=1}^d \sig_t^{(i,j)} S_t^{(i)} \dd W_t^{(j)} , &
S_0^{(i)}
	&\geq	0 ,
\end{align}
where $W = (W_t^{(1)}, W_t^{(2)}, \ldots, W_t^{(d)})$ is a $d$-dimensional $(\Fb,\Pb)$-Brownian motion with independent components.
Lastly, throughout this paper, in order to avoid unnecessary complications, we assume all local martingales are true martingales.

\section{Perpetual contracts with funding}
\label{sec:perp-1}
We will discuss two types of perpetual contracts in this paper: (i) perpetual contracts with funding and (ii) perpetual contrast with discounting.  In this section, we focus on the former.  We begin with a definition.

\begin{definition}
\label{def:perp}
A \textit{perpetual contract with funding} (or simply, a \textit{perp}) written on $S$ with payoff function $\phi: \Rb^n \to \Rb$ is a agreement between two parties, referred to as the \textit{long side} and \textit{short side}.  The long side has the right to terminate the contract at any time $t \geq 0$, at which point it will receive a payment of $\phi(S_t)$.  In return, the long-side must pay to the short side $\phi(S_0)$ at the time $t=0$ of inception as well as a continuous $\Fb$-adapted cash-flow of $F=(F_t)_{t \geq 0}$ per unit time, referred to as the \textit{funding rate}, up until the contract is terminated.
\end{definition}


\noindent
The following theorem gives an expression for the funding rate.

\begin{theorem}
\label{thm:funding}
Consider a perpetual contract as described in Definition \ref{def:perp}.  Suppose that the function $\phi \in C^2(\Rb^d,\Rb)$.  Under the assumptions of Section \ref{sec:model}, the funding rate $F$ 
is given by
\begin{align}
F_t
	&=	\frac{1}{2} \sum_{i=1}^n \sum_{j=1}^n (\sig_t \sig_t^\top)^{(i,j)} S_t^{(i)} S_t^{(j)} \d_i \d_j \phi(S_t) 
			- \Big( \phi(S_t) - \sum_{i=1}^n S_t^{(i)} \d_i \phi(S_t) \Big) r_t .\label{eq:F}
\end{align}
where $(\sig_t \sig_t^\top)^{(i,j)}$ denotes the of $(i,j)$-th component of $\sig_t \sig_t^\top$ and $\d_j := \frac{\d}{\d s_j}$.
\end{theorem}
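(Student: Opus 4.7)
The plan is to derive $F_t$ by constructing an explicit replicating portfolio for the short side and imposing self-financing once the incoming funding stream is included. Because the long side can terminate at any instant of its choosing and receive $\phi(S_t)$, absence of arbitrage forces the short's hedging portfolio to have value exactly $\phi(S_t)$ at every $t \geq 0$; this pins down the composition of the hedge and turns the funding rate into a bookkeeping residual.

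First, I would hold $\Delta_t^{(i)} := \d_i \phi(S_t)$ units of each risky asset $S^{(i)}$ and put the remainder
\begin{align}
B_t M_t \;:=\; \phi(S_t) - \sum_{i=1}^n S_t^{(i)} \d_i \phi(S_t)
\end{align}
into the money market, so that the portfolio value equals $\phi(S_t)$ at all times. Then I would impose self-financing augmented by the continuous cash inflow $F_t \dd t$ from the long:
\begin{align}
\dd \phi(S_t) \;=\; \sum_{i=1}^n \Delta_t^{(i)} \dd S_t^{(i)} + B_t \dd M_t + F_t \dd t .
\end{align}
Applying It\^o's formula to $\phi(S_t)$ using the SDE of Section~\ref{sec:model}, the Brownian and first-order drift contributions on the left cancel against $\sum_i \Delta_t^{(i)} \dd S_t^{(i)}$ by the choice $\Delta_t^{(i)} = \d_i \phi(S_t)$, leaving only the quadratic-variation term $\tfrac{1}{2}\sum_{i,j} (\sig_t \sig_t^\top)^{(i,j)} S_t^{(i)} S_t^{(j)} \d_i \d_j \phi(S_t) \dd t$. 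Equating this with $B_t r_t M_t \dd t + F_t \dd t$ and substituting the expression for $B_t M_t$ yields~\eqref{eq:F} directly.

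The main obstacle is justifying that this replication-based rate is the \emph{unique} no-arbitrage funding rate, rather than merely one feasible choice. The cleanest argument is to invoke an equivalent martingale measure $\Qb$ under which each $S^{(i)}$ has drift $r_t S_t^{(i)} \dd t$ and to observe that absence of arbitrage, regardless of the long's stopping rule, forces
\begin{align}
D_t \phi(S_t) - \int_0^t D_s F_s \dd s, \qquad D_t := \exp\Bigl(-\!\!\int_0^t r_u \dd u\Bigr),
\end{align}
to be a $\Qb$-martingale started at $\phi(S_0)$; computing the drift of $D_t \phi(S_t)$ under $\Qb$ via It\^o and matching it with $D_t F_t \dd t$ reproduces the same formula. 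The assumption from Section~\ref{sec:model} that all local martingales are true martingales is precisely what eliminates integrability pathologies in this step and legitimizes applying optional sampling at arbitrary termination times chosen by the long.
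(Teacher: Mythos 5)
Your proposal is correct and follows essentially the same route as the paper: delta-hedge with $\Del_t^{(i)} = \d_i\phi(S_t)$, park the residual $\phi(S_t) - \sum_i S_t^{(i)}\d_i\phi(S_t)$ in the money market, and read off $F_t$ by matching the $\dd t$ terms of the self-financing identity against It\^o's formula. Your additional risk-neutral argument---that the discounted gains process of the long side must be a $\Qb$-martingale, which pins down $F$ uniquely rather than merely exhibiting one replicable choice---is a worthwhile supplement that the paper's proof does not include, but it does not change the core of the derivation.
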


\begin{proof}
We will show that, with $F_t$ given by \eqref{eq:F}, the short side can create a self-financing portfolio whose value $X = (X_t)_{t \geq 0}$ satisfies 
\begin{align}
X_t 
	&=	\phi(S_t) , \label{eq:goal}
\end{align}
for all $t \geq 0$.  To begin, we note that the value of the short side's portfolio must be of the form
\begin{align}
\dd X_t 
	&=	\sum_{i=1}^n \Del_t^{(i)} \dd S_t^i + \Big( X_t - \sum_{i=1}^n \Del_t^{(i)} S_t^{(i)} \Big) \frac{1}{M_t} \dd M_t + F_t \dd t \\
	&=	\sum_{i=1}^n \Del_t^{(i)} \dd S_t^i + \Big( X_t - \sum_{i=1}^n \Del_t^{(i)} S_t^{(i)} \Big) r_t \dd t + F_t \dd t , \label{eq:dX}
\end{align}
where $\Del_t^{(i)}$ denotes the number of shares of invested in asset $i$ at time $t$.  Next, 
we have by It\^o's Lemma that
\begin{align}
\dd \phi(S_t)
	&=	\sum_{i=1}^n \d_i \phi(S_t) \dd S_t^{(i)} + \frac{1}{2} \sum_{i=1}^n \sum_{j=1}^n \d_i \d_j \phi(S_t) \dd \< S^{(i)}, S^{(i)} \>_t \\
	&=	\sum_{i=1}^n \d_i \phi(S_t) \dd S_t^{(i)} + \frac{1}{2} \sum_{i=1}^n \sum_{j=1}^n (\sig_t \sig_t^\top)^{(i,j)} S_t^{(i)} S_t^{(j)} \d_i \d_j \phi(S_t) \dd t . \label{eq:dphi}
\end{align}
Now, note that \eqref{eq:goal} holds for all $t \geq 0$ if and only if $X_0 = \phi(S_0)$ and 
$\dd X_t = \dd \phi(S_t)$.  
Comparing \eqref{eq:dX} with \eqref{eq:dphi}, we see that the $\dd S_t^{(i)}$ terms will be equal if we set
\begin{align}
\Del_t^{(i)}
	&=	\d_i \phi(S_t) . \label{eq:delta}
\end{align}
Next, using $X_t = \phi(S_t)$ and $\Del_t^{(i)} = \d_i \phi(S_t)$, we see that the $\dd t$ terms in \eqref{eq:dX} and \eqref{eq:dphi} will be equal if $F_t$ is given by \eqref{eq:F}.
\end{proof}

\begin{remark}
Note that the funding rate can be positive or negative.  If at time $t$ the funding rate is negative, then short side pays the long side at a rate of $-F_t$.
\end{remark}


\begin{remark}
Observe that precise knowledge of $r$, $\mu$ and $\sig$ is not needed to determine the funding rate $F_t$.  Indeed, using
\begin{align}
r_t
	&=	\frac{\dd }{\dd t} \log M_t, &
(\sig_t \sig_t^\top)^{(i,j)}
	&=	\frac{ \dd  }{\dd t} \< \log S^{(i)}, \log S^{(j)} \>_t,
\end{align}
we can express $F$ in the following \textit{model-free} form
\begin{align}
F_t \dd t
	&=	\frac{1}{2} \sum_{i=1}^n \sum_{j=1}^n S^{(i)} S^{(j)} \d_i \d_j \phi(S_t) \dd \< \log S^{(i)}, \log S^{(j)} \>_t
			- \Big( \phi(S_t) - \sum_{i=1}^n S_t^{(i)} \d_i \phi(S_t) \Big) \dd \log M_t . \label{eq:model-free-2}
\end{align}
By contrast, in order to price and replicate most traditional financial
derivatives such as European, American, Bermudan and Barrier options, one
requires a parametric model for the underlying $S$ as well as knowledge of
unobservable model parameters.
\end{remark}


\begin{example}
A (continuously monitored) \textit{variance swap} (VS), written on an asset $S \equiv S^{(1)}$ is an agreement between two parties, referred to as the \textit{long} and \textit{short} sides.  At the maturity date $T$, the short side pays the long side
\begin{align}
\int_0^T \dd \< \log S \>_t - K ,
\end{align}
where the \textit{swap rate} $K$ is determined at inception $t=0$ so that the
initial cost to enter the swap is zero.  Under the assumptions of Section
\ref{sec:model}, the swap rate $K$ is given by $- 2 \Ebt \log (S_T/S_0)$, where
$\Ebt$ denotes expectation under the market's chosen pricing measure $\Pbt$,
which can be deduced by observing implied volatilities of $T$-maturity European
calls and puts (see, e.g., \cite{carr2001towards}).  Because implied
volatilities tend to be higher than realized volatility (this is sometimes
known as the \textit{volatility premium}) taking the long side of a VS is
typically a losing trade.  As an alternative to entering the long side of a VS,
an investor wishing to gain exposure to volatility could take a long position
in a perp as described in Definition \ref{def:perp} with payoff $\phi(S_t) = 2
\log (S_t/S_0)$.  Like a VS, there is no cost to entering this perp because
$\phi(S_0) = 2 \log (S_0/S_0) = 0$.  Moreover, assuming $r \equiv 0$ for
simplicity, we have from \eqref{eq:model-free-2} that the funding rate is
\begin{align}
F_t \dd t
	&=	- \dd  \< \log S \>_t .
\end{align}
Therefore, if the long side chooses to terminate the contract at time $T$, the
value of the payoff minus funding paid is
\begin{align}
\phi(S_T) - \int_0^T F_t \, \dd t
	&=	2 \log \Big( \frac{S_T}{S_0} \Big) 
			+ \int_0^T \dd \<\log S \>_t .
\end{align}
Thus, by taking a long position in a perp, the investor can achieve the same exposure to volatility that they would have had they taken the long side of a VS, without paying a volatility premium.
\end{example}

\section{Perpetual contracts with discounting}
\label{sec:perp-2}
One of the problems with a perp with funding is that execution of the contract
requires the long side to place a deposit (e.g., on an exchange or into a
smart contract) at inception in order to pay the funding rate.
If the time-integral of the
funding rate ever exceeds the deposit, the contract is automatically
terminated.  One way to avoid automatic termination of the contract is to
consider, instead, a perpetual contract with discounting, whose mechanics are
described in the following definition.

\begin{definition}
\label{def:perp-2}
An \textit{perpetual contract with discounting} (or simply, a \textit{perp}) written on $S$ with payoff function $\phi: \Rb^n \to \Rb$ is a agreement between two parties, referred to as the \textit{long side} and \textit{short side}.  The long side has the right to terminate the contract at any time $t \geq 0$, at which point it will receive a payment of $\ee^{-\int_0^t D_s \dd s} \phi(S_t)$, where $D=(D_t)_{t \geq 0}$ is an $\Fb$-adapted process known as the \textit{discount rate}.  In return, at the time of inception $t=0$, the long side must pay to the short side a premium $\phi(S_0)$.
\end{definition}

\noindent
The following theorem gives an expression for the discount rate.

\begin{theorem}
\label{thm:decay}
Consider a perpetual contract with discounting as described in Definition \ref{def:perp-2}.  Suppose that the function $\phi \in C^2(\Rb^d,\Rb)$ and is either strictly positive or strictly negative.  Then, under the assumptions of Section \ref{sec:model}, the discount rate $D$ is given by
\begin{align}
D_t
	&=		\frac{F_t}{\phi(S_t)} , \label{eq:D}
\end{align}
where $F_t$ is given by \eqref{eq:F}.
\end{theorem}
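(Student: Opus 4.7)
The plan is to mirror the replication argument in Theorem \ref{thm:funding}, applied now to the discounted payoff process $Y_t := \ee^{-\int_0^t D_s \dd s}\,\phi(S_t)$. Concretely, I would seek a self-financing portfolio for the short side with value $X$, but \emph{without} the $F_t \dd t$ term that appeared in Theorem \ref{thm:funding}, and demand that $X_t = Y_t$ for all $t \geq 0$. The discount rate $D_t$ is then whatever is required to make this matching consistent.

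First, I would write the self-financing dynamics of the short side's portfolio,
\begin{align}
\dd X_t = \sum_{i=1}^n \Del_t^{(i)} \dd S_t^{(i)} + \Big(X_t - \sum_{i=1}^n \Del_t^{(i)} S_t^{(i)}\Big) r_t \dd t,
\end{align}
and then compute $\dd Y_t$ by combining the It\^o expansion of $\dd \phi(S_t)$ from \eqref{eq:dphi} with the finite-variation differential of the discount factor $\ee^{-\int_0^t D_s \dd s}$. Since the discount factor is absolutely continuous in $t$, no extra covariation term appears, and one obtains $\dd Y_t = -D_t Y_t \dd t + \ee^{-\int_0^t D_s \dd s}\, \dd \phi(S_t)$.

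Then I would match coefficients under the ansatz $X_t = Y_t$. Equating the $\dd S_t^{(i)}$ terms forces $\Del_t^{(i)} = \ee^{-\int_0^t D_s \dd s}\,\d_i \phi(S_t)$, and equating the $\dd t$ terms and dividing by the strictly positive factor $\ee^{-\int_0^t D_s \dd s}$ reduces everything to the scalar identity
\begin{align}
-D_t\,\phi(S_t) + \frac{1}{2} \sum_{i=1}^n\sum_{j=1}^n (\sig_t\sig_t^\top)^{(i,j)} S_t^{(i)} S_t^{(j)} \d_i \d_j \phi(S_t) = \Big(\phi(S_t) - \sum_{i=1}^n S_t^{(i)} \d_i \phi(S_t)\Big) r_t,
\end{align}
which, by \eqref{eq:F}, is precisely $D_t\,\phi(S_t) = F_t$. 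Solving for $D_t$ yields \eqref{eq:D}.

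The only delicate point, and the place where the hypothesis on $\phi$ is actually used, is the division by $\phi(S_t)$: requiring $\phi$ to be either strictly positive or strictly negative guarantees $\phi(S_t) \neq 0$ along every path of $S$, so $D_t$ is well-defined and $\Fb$-adapted (it is a continuous function of $S_t$, $\sig_t$, and $r_t$). I do not anticipate any serious obstacle beyond this sign assumption; the heavy lifting — the It\^o expansion and the self-financing bookkeeping — has already been done in the proof of Theorem \ref{thm:funding}, and the present argument essentially reuses it with the discount factor absorbed into the delta.
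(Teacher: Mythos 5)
Your proposal is correct and follows essentially the same route as the paper's own proof: replicate the discounted payoff $\ee^{-\int_0^t D_s \dd s}\phi(S_t)$ with a self-financing portfolio carrying no funding term, match the $\dd S_t^{(i)}$ coefficients to get $\Del_t^{(i)} = \ee^{-\int_0^t D_s \dd s}\,\d_i\phi(S_t)$, and match the $\dd t$ terms to obtain $D_t\,\phi(S_t) = F_t$. Your added remark on where the strict-sign hypothesis on $\phi$ is used (to justify dividing by $\phi(S_t)$) is a correct observation that the paper leaves implicit.
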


\begin{proof}
We will show that, with $D_t$ given by \eqref{eq:D}, the short side can create a self-financing portfolio whose value $X = (X_t)_{t \geq 0}$ satisfies 
\begin{align}
X_t = \exp(-\int_0^t D_s \dd s)\phi(S_t) , \label{eq:matching}
\end{align}
for all $t \geq 0$.  To begin, we note that the dynamics of the short side's portfolio $X$ must be of the form
\begin{align}
\dd X_t
	&=	\sum_{i=1}^n \Del_t^{(i)} \dd S_t^i + \Big( X_t - \sum_{i=1}^n \Del_t^{(i)} S_t^{(i)} \Big) \frac{1}{M_t} \dd M_t \\
	&=	\sum_{i=1}^n \Del_t^{(i)} \dd S_t^i + \Big( X_t - \sum_{i=1}^n \Del_t^{(i)} S_t^{(i)} \Big) r_t \dd t , \label{eq:dX-2}
\end{align}
where $\Del_t^{(i)}$ denotes the number of shares of invested in asset $i$ at time $t$.  Next, we have by It\^o's Lemma that
\begin{align}
&\dd \Big( \ee^{- \int_0^t D_s \dd s} \phi(S_t) \Big) \\
	&=	\ee^{- \int_0^t D_s \dd s} 
			\Big( \sum_{i=1}^n \d_i \phi(S_t) \dd S_t^{(i)} + \frac{1}{2} \sum_{i=1}^n \sum_{j=1}^n \d_i \d_j \phi(S_t) \dd \< S^{(i)}, S^{(i)} \>_t \Big) 
			- D_t \ee^{- \int_0^t D_s \dd s}  \phi(S_t) \dd t \\
	&=	\ee^{- \int_0^t D_s \dd s} \Big( \sum_{i=1}^n \d_i \phi(S_t) \dd S_t^{(i)} 
			+ \frac{1}{2} \sum_{i=1}^n \sum_{j=1}^n (\sig_t \sig_t^\top)^{(i,j)} S_t^{(i)} S_t^{(j)} \d_i \d_j \phi(S_t) \dd t \Big) 
			- D_t \ee^{- \int_0^t D_s \dd s}  \phi(S_t) \dd t . \label{eq:dphi-2}
\end{align}
Now, note that \eqref{eq:matching} will hold
for all $t \geq 0$ if and only if $X_0 = \phi(S_0)$ and $\dd X_t = \dd( \ee^{-\int_0^t D_s \dd s} \phi(S_t))$.  Comparing \eqref{eq:dX-2} with \eqref{eq:dphi-2}, we see that the $\dd S_t^{(i)}$ terms will be equal if we set
\begin{align}
\Del_t^{(i)}
	&=	\ee^{- \int_0^t D_s \dd s} \d_i \phi(S_t) . \label{eq:delta-2}
\end{align}
Next, using $X_t = \exp(-\int_0^t D_s \dd s) \phi(S_t)$ and $\Del_t^{(i)} = \exp(-\int_0^t D_s \dd s) \d_i \phi(S_t)$, we see that the $\dd t$ terms in \eqref{eq:dX-2} and \eqref{eq:dphi-2} will be equal if $D_t$ is given by \eqref{eq:D}.
\end{proof}

\begin{example}
Let $L = (L_t)_{t \geq 0}$ be the value of a \textit{leveraged exchange traded fund} (LETF) with underlyer $S \equiv S^{(1)}$ and \textit{leverage ratio} $\gamma$.  The manager of such an LETF seeks to multiply the returns of $S$ by a factor of $\gamma$ by holding $\gam L_t / S_t$ shares of the underlyer for all $t \geq 0$ and borrowing from the bank to finance the position (see, e.g., \cite{leung2016leveraged}).  Thus, the dynamics of $L$ are as follows
\begin{align}
\dd L_t
	&=	\Del_t \dd S_t + (X_t - \Del_t S_t) \frac{1}{M_t} \dd M_t , &
\Del_t
	&=	\gam L_t / S_t .
\end{align}
Solving for $L_t$ one obtains the following expression
\begin{align}
L_t
	&=	L_0 \Big( \frac{S_t}{S_0} \Big)^\gam \Big( \frac{M_t}{M_0} \Big)^{(1-\gam)} \exp \Big( \frac{\gam(1-\gam)}{2} \< \log S \>_t \Big) .
\end{align}
Now, consider a perp as described in Definition \ref{def:perp-2} with payoff $\phi(S_t) = L_0 (S_t/S_0)^\gam$.  We have from \eqref{eq:model-free-2} and \eqref{eq:D} that
\begin{align}
D_t \dd t
	&=	\frac{\gam (\gam-1)}{2} \dd \< \log S \>_t - (1 - \gam) \dd \log M_t .
\end{align}
If the long side terminates the perp at time $t$ it will receive
\begin{align}
\phi(S_t) \ee^{- \int_0^t D_s \dd s}
	&=	L_0 \Big( \frac{S_t}{S_0} \Big)^\gam \exp \Big( -\frac{\gam (\gam-1)}{2} \int_0^t \dd \< \log S \>_s + (1 - \gam) \int_0^t \dd \log M_s \Big) \\
	&=	L_0 \Big( \frac{S_t}{S_0} \Big)^\gam \Big( \frac{M_t}{M_0} \Big)^{(1-\gam)} \exp \Big( \frac{\gam(1-\gam)}{2} \< \log S \>_t \Big) .
\end{align}
Thus, an LETF written on $S$ with leverage ratio $\gam$ can be viewed as a special case of a perp with payoff function $\phi(S_t) = L_0 (S_t/S_0)^\gam$.
Such perps trade widely on the Ethereum blockchain.  Indeed, \textit{Squared ETH} or \textit{Squeeth}, which trades on the Decentralized Finance (DeFi) protocol Opyn, is simply a perp with payoff function $\phi(S_t) = S_t^2$, where $S_t$ is the value on dollars of Ethereum.
\end{example}

\begin{example}
If, at time $t=0$, one deposits two tokens into a geometric mean constant function market maker (CFMM), then, ignoring fees collected by the CFMM, the value $V=(V_t)_{t \geq 0}$ of this deposit at time $t \geq 0$ is
\begin{align}
V_t
	&=	\Big( \frac{p V_0}{S_0^{(1)}} S_t^{(1)} + \frac{q V_0}{S_0^{(2)}} S_t^{(2)} \Big)
			\frac{(S_t^{(1)}/S_0^{(1)})^p (S_t^{(2)}/S_0^{(2)})^q}{p(S_t^{(1)}/S_0^{(1)}) + q(S_t^{(2)}/S_0^{(2)})} \\
	&=	V_0 \Big( \frac{S_t^{(1)}}{S_0^{(1)}} \Big)^p \Big( \frac{S_t^{(2)}}{S_0^{(2)}} \Big)^q , &
\end{align}
where $p$ and $q$ are constants satisfying $p,q>0$ and $p+q=1$ (c.f.,~\cite{AC20,evans2020liquidity}).

Now, consider a perp as described in Definition \ref{def:perp-2} with payoff 
\begin{align}
\phi(S_t) 
	&=	V_0 \Big( \frac{S_t^{(1)}}{S_0^{(1)}} \Big)^p \Big( \frac{S_t^{(2)}}{S_0^{(2)}} \Big)^q . \label{eq:gmm-payoff}
\end{align}
We have from \eqref{eq:model-free-2} and \eqref{eq:D} that
\begin{align}
D_t \dd t
	&=	\frac{p(p-1)}{2} \dd \< \log S^{(1)} \>_t + \frac{q(q-1)}{2} \dd \< \log S^{(2)} \>_t 
			+ p q \dd \< \log S^{(1)}, \log S^{(2)} \>_t .
\end{align}
If the perp is terminated at time $t$ the value of the payoff to the long side is
\begin{align}
\phi(S_t) \ee^{-\int_0^t D_s \dd s}
	&=	V_0 \Big( \frac{S_t^{(1)}}{S_0^{(1)}} \Big)^p \Big( \frac{S_t^{(2)}}{S_0^{(2)}} \Big)^q \\ &\quad
			\times \exp \Big(\frac{p(1-p)}{2} \< \log S^{(1)} \>_t + \frac{q(1-q)}{2} \< \log S^{(2)} \>_t - p q \< \log S^{(1)}, \log S^{(2)} \>_t \Big) .
\end{align}
One can show that the term in the exponent is positive along every path of $(S^{(1)},S^{(2)})$.
\footnote{This follows from $p (1-p) \dd x^2 + q (1-q) \dd y^2 - 2 p q \dd x \dd y =	p q \dd x^2 + p q \dd y^2 - 2 p q \dd x \dd y =	( p \dd x - q \dd y)^2 \geq 0$.}
Thus, ignoring fees, rather than deposit tokens into a CFMM it would always be better to take a long position in a perp with payoff \eqref{eq:gmm-payoff}.
\end{example}

\section{Extension to models with jumps}
\label{sec:jumps}
In this section, we derive funding and discount rates for perpetual contracts as well as replication strategies for the short side when asset prices are allowed to jump.  For simplicity, we will assume the risk-free rate of interest is zero ($r_t = 0$) and we will consider perpetuals written on a single risky asset $S$.  The extension to non-zero interest rates and multiple assets it straightforward, but tedious.
\\[0.5em]
Fix a filtered probability space $(\Om,\Fc,\Fb,\Pbt)$ where the filtration $\Fb$ represents the history of the market and 
$\Pbt$ denotes the \textit{market's chosen pricing measure}.  We suppose that the dynamics of the risky asset $S$ are of the form
\begin{align}
\dd S_t
	&=	\sig_{t-} S_{t-} \dd \Wt_t + \int S_{t-} ( \ee^{\gam_{t-}(z)} - 1 ) \Nt(\dd t, \dd z) , \label{eq:dS-jumps-4}
\end{align}
where $\sig = (\sig_t)_{t \geq 0}$ and $\gam(z) = (\gam_t(z))_{t \geq 0}$ for every $z \in \Rb$ are scalar $\Fb$-adapted processes, $\Wt = (\Wt_t)_{t \geq 0}$ is a scalar $(\Pbt,\Fb)$-Brownian motion and $\Nt(\dd t, \dd z)= N(\dd t, \dd z) - \nu(\dd z) \dd t$ is a compensated Poisson random measure on $\Rb$.  Observe that $S$ is a $(\Pbt,\Fb)$-martingale, as it must be in the absence of arbitrage.  The following theorem gives the funding rate $F$ and discount rate $D$ for the perpetual contracts described in Definitions \ref{def:perp} and \ref{def:perp-2}, respectively, when the dynamics of the underlying $S$ are given by \eqref{eq:dS-jumps-4}.

\begin{theorem}
\label{thm:F-and-D}
Suppose the dynamics of a single underlying risky asset are of the form \eqref{eq:dS-jumps-4} and the payoff function $\phi$ of a perpetual satisfies $\phi \in C^2(\Rb,\Rb)$ and $\phi \neq 0$.  Then, the funding rate $F$ of the perpetual contract described in Definition \ref{def:perp} is given by
\begin{align}
F_{t}
	&=	\frac{1}{2} \sig_{t}^2 S_{t}^2 \phi''(S_{t}) 
			+ \int \Big( \phi(S_{t} \ee^{\gam_{t}(z)}) - \phi(S_{t}) - S_{t} ( \ee^{\gam_{t}(z)} - 1)\phi'(S_{t}) \Big) \nu(\dd z) . \label{eq:F-jumps}
\end{align}
and the discount rate $D$ of a perpetual contract described in Definition \ref{def:perp-2} is given by $D_{t} = F_{t}/\phi(S_{t})$.
\end{theorem}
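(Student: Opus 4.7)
The approach parallels that of Theorems~\ref{thm:funding} and~\ref{thm:decay}, but with It\^o's formula replaced by its jump-diffusion version. Because markets with jumps are generically incomplete, perfect replication of $\phi(S_t)$ by trading in $S$ and the money-market account is unavailable, so I would argue directly via the pricing measure $\Pbt$: for the contract to be fair at every time (so that the long side has no arbitrage from its ability to terminate at any stopping time), the process $\phi(S_t) - \int_0^t F_s \dd s$ must be a $\Pbt$-martingale for the funding perp, while $\ee^{-\int_0^t D_s \dd s}\phi(S_t)$ must be a $\Pbt$-martingale for the discount perp. Since $S$ is already a $\Pbt$-martingale by the form of~\eqref{eq:dS-jumps-4}, the task reduces to computing the $\Pbt$-drift of $\phi(S_t)$ and matching it against $F_t$ or $D_t \phi(S_t)$, respectively.

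For the funding perp, I would apply the jump-diffusion It\^o formula to $\phi(S_t)$ to obtain
\begin{align}
\dd \phi(S_t) &= \phi'(S_{t-}) \dd S_t + \frac{1}{2} \phi''(S_{t-}) \dd [S^c, S^c]_t \\
&\quad + \int \Big( \phi(S_{t-}\ee^{\gam_{t-}(z)}) - \phi(S_{t-}) - \phi'(S_{t-}) S_{t-}(\ee^{\gam_{t-}(z)} - 1) \Big) N(\dd t, \dd z),
\end{align}
where $S^c$ denotes the continuous part of $S$. Substituting $N(\dd t, \dd z) = \Nt(\dd t, \dd z) + \nu(\dd z) \dd t$ and using $\dd[S^c, S^c]_t = \sig_{t-}^2 S_{t-}^2 \dd t$ together with the expansion of $\dd S_t$ from~\eqref{eq:dS-jumps-4} separates $\dd \phi(S_t)$ into a $\Pbt$-martingale part (driven by $\dd \Wt_t$ and $\Nt(\dd t, \dd z)$) plus a drift
\begin{align}
\frac{1}{2} \sig_{t-}^2 S_{t-}^2 \phi''(S_{t-}) + \int \Big( \phi(S_{t-}\ee^{\gam_{t-}(z)}) - \phi(S_{t-}) - S_{t-}(\ee^{\gam_{t-}(z)} - 1) \phi'(S_{t-}) \Big) \nu(\dd z).
\end{align}
Equating this drift to $F_t$ makes $\phi(S_t) - \int_0^t F_s \dd s$ a $\Pbt$-martingale, which yields~\eqref{eq:F-jumps}.

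For the discount perp, I would apply the product rule to $\ee^{-\int_0^t D_s \dd s}\phi(S_t)$ and read off its $\Pbt$-drift as $\ee^{-\int_0^t D_s \dd s}\bigl((\text{drift of }\phi(S_t))- D_t \phi(S_t)\bigr)$. Setting this drift to zero and inserting the drift expression just derived gives $D_t \phi(S_t) = F_t$, so $D_t = F_t/\phi(S_t)$, which is well-defined since $\phi \neq 0$ by hypothesis.

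The main technical subtlety will be the correct bookkeeping of the jump terms in It\^o's formula. The jump component of $\phi'(S_{t-})\dd S_t$ partially cancels the linear piece of the jump-correction integral once $N$ is decomposed as $\Nt + \nu(\dd z)\dd t$, leaving precisely the Merton-style second-order correction $\phi(S_{t-}\ee^{\gam_{t-}(z)}) - \phi(S_{t-}) - S_{t-}(\ee^{\gam_{t-}(z)}-1)\phi'(S_{t-})$ inside the compensator integral. Once this decomposition is handled carefully, the remainder of the argument is routine.
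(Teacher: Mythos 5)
Your proposal is correct and follows essentially the same route as the paper: both arguments impose that the long side's value process be a $(\Pbt,\Fb)$-martingale, apply the jump-diffusion It\^o formula to $\phi(S_t)$ (resp.\ to $\ee^{-\int_0^t D_s \dd s}\phi(S_t)$), and set the resulting $\dd t$ drift to zero to obtain \eqref{eq:F-jumps} and $D_t = F_t/\phi(S_t)$. Your jump bookkeeping (decomposing $N = \Nt + \nu\,\dd z\,\dd t$ to isolate the Merton-style compensator integral) reproduces exactly the generator $\Ac_t\phi(S_t)$ the paper writes down.
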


\begin{proof}
Consider first a perpetual contract with funding, as described in Definition \ref{def:perp}.  The infinitesimal change in value of the long side is
\begin{align}
\dd \phi(S_t) - F_{t} \dd t
	&=	\Ac_{t} \phi(S_{t}) \dd t + \sig_{t} S_{t} \phi'(S_{t}) \dd \Wt_t 
			+ \int \Big( \phi(S_{t-} \ee^{\gam_{t-}(z)} ) - \phi(S_{t-}) \Big) \Nt(\dd t, \dd z) - F_{t} \dd t, \label{eq:d-long} \\
\Ac_{t} \phi(S_{t})
	&:=	\frac{1}{2} \sig_{t}^2 S_{t}^2 \phi''(S_{t}) 
			+ \int \Big( \phi(S_{t} \ee^{\gam_{t}(z)} ) - \phi(S_{t}) - S_{t}(\ee^{\gam_{t}(z)} - 1) \phi'(S_{t}) \Big) \nu(\dd z)  .
\end{align}
In the absence of arbitrage, the value of the long-side must be a $(\Pbt,\Fb)$-martingale.  As such, the sum of the $\dd t$ terms in \eqref{eq:d-long} must be zero, which leads to the expression \eqref{eq:F-jumps} for $F$.  Next, consider a perpetual contract with discounting, as described in Definition \ref{def:perp-2}.  The infinitesimal change in value of the long side is given by
\begin{align}
\dd \Big( \ee^{-\int_0^t D_s \dd s} \phi(S_t) \Big)
	&=	- D_t \ee^{-\int_0^t D_s \dd s} \phi(S_t) \dd t + \ee^{-\int_0^t D_s \dd s} \dd \phi(S_t) \\
	&=	- D_t \ee^{-\int_0^t D_s \dd s} \phi(S_t) \dd t + \ee^{-\int_0^t D_s \dd s} \Ac_t \phi(S_t) \dd t  
			+ \ee^{-\int_0^t D_s \dd s} \sig_t S_t \phi''(S_t) \dd \Wt_t \\ & \quad
			+ \ee^{-\int_0^t D_s \dd s} \int \Big( \phi(S_{t-} \ee^{\gam_{t-}(z)} ) - \phi(S_{t-}) \Big) \Nt(\dd t, \dd z) . \label{eq:d-long-2}
\end{align}
Once again, in the absence of arbitrage, the value of the long side must be a $(\Pbt,\Fb)$-martingale.  As such, the sum of the $\dd t$ terms in \eqref{eq:d-long-2} must be zero, which leads to $D_{t} = F_{t}/\phi(S_{t})$, where $F$ is given by \eqref{eq:F-jumps}.
\end{proof}

\begin{remark}
Note that, unlike the proofs of Theorems \ref{thm:funding} and \ref{thm:decay}, which provide expressions for the funding rate $F$ and discount rate $D$ of the perpetual contracts described in Definitions \ref{def:perp} and \ref{def:perp-2} \textit{as well as} replication strategies for the short-side, the proof of Theorem \ref{thm:F-and-D} provides only the funding and discount rates for perpetual contracts but says nothing about a replication strategy for the short-side.  In order to derive a replication strategy for the short side when the underlying asset $S$ can jump, we will need to make some additional assumptions.
\end{remark}


\noindent
Henceforth, assume that the dynamics of the risky asset $S$ are of the form
\begin{align}
\dd S_t
	&=	\sig_{t-} S_{t-} \dd \Wt_t + \int S_{t-} ( \ee^z - 1 ) \Nt(\dd t, \dd z) , \label{eq:dS-jumps-3}
\end{align}
where the L\'evy measure $\nu$ associated with $\Nt$ is a \textit{Dirac comb}
\begin{align}
\nu(\dd z)
	&=	\sum_{j=1}^n \lam_j \del_{z_j}(z) \dd z , \label{eq:comb}
\end{align}
and the volatility process $\sig$ evolves independently of the Brownian motion $\Wt$ and the Poisson random measure $\Nt$ that appear in \eqref{eq:dS-jumps-3}
\begin{align}
\sig
	&\ind \Wt , &
\sig
	&\ind \Nt . \label{eq:sigma-assumptions}
\end{align}
Assume further that one can trade call and put options with any strike at a fixed maturity $T$.  
This is equivalent to assuming one can trade any $T$-maturity European option whose payoff can be written as the difference of convex functions, as these payoffs can be synthesized from call and put payoffs.

\begin{theorem}
\label{thm:replication}
Suppose the dynamics of $S$ satisfy \eqref{eq:dS-jumps-3}, \eqref{eq:comb} and \eqref{eq:sigma-assumptions}
and consider a perpetual contract as described in Definition \ref{def:perp}.
From Theorem \ref{thm:F-and-D} the funding rate $F$ is given by
\begin{align}
F_{t}
	&=	\frac{1}{2} \sig_{t}^2 S_{t}^2 \phi''(S_{t}) 
			+ \int \Big( \phi(S_{t} \ee^z) - \phi(S_{t}) - S_{t} ( \ee^z - 1)\phi'(S_{t}) \Big) \nu(\dd z) . \label{eq:F-3}
\end{align}
Let $P^{(p)} = (P_t^{(p)})_{0 \leq t \leq T}$ denote the value of a European power contract, which pays $S_T^p$ at time $T$ where $p \in \Rb$.
Fix $(p_1, p_2, \ldots, p_n) \in \Rb^n$ such that the $(n + 1) \times (n+1)$ stochastic matrix $H = (H_t)_{0 \leq t \leq T}$ with entries given by
\begin{align}
\left. \begin{aligned}
H_{t-}^{(j,i)}
	&=	\ee^{\psi(\pb_i) (T-t)} P_{t-}^{(p_i)} S_{t-}^{\pb_i} ( \ee^{p_i z_j} - 1 ) - \ee^{\psi(p_i) (T-t)} P_{t-}^{(\pb_i)} S_{t-}^{p_i}  ( \ee^{\pb_i z_j} - 1 ) , &
j,i
	&\leq n , \\
H_{t-}^{(j,n+1)}
	&=	S_{t-} (\ee^{z_j} - 1 ), &
j
	&\leq n , \\
H_{t-}^{(n+1,i)}
	&=  \sig_{t-} \Big( p_i \ee^{\psi(\pb_i) (T-t)} P_{t-}^{(p_i)} S_{t-}^{\pb_i}  - \pb_i \ee^{\psi(p_i) (T-t)} P_{t-}^{(\pb_i)} S_{t-}^{p_i} \Big) , &
i
	&\leq n , \\
H_{t-}^{(n+1,n+1)}
	&= \sig_{t-} S_{t-} . 
\end{aligned} \right\} \label{eq:H}
\end{align}
is invertible for all $t \in [0,T]$, where $\pb_i := 1-p_i$ for all $i$ and the function $\psi$ is defined as follows
\begin{align}
\psi(p)
	&=	\int \Big( (\ee^{pz} - 1) - p(\ee^z-1) \Big) \nu(\dd z) .
\end{align}
Let $X = (X_t)_{0 \leq t \leq T}$ be a the value of self-financing portfolio with dynamics of the form
\begin{align}
\dd X_t
	&=	\Del_{t-} \dd S_t + \sum_{i=1}^n \Gam_{t-}^{(p_i)} \dd Y_t^{(p_i)} + F_{t-} \dd t , &
X_0
	&=	\phi(S_0) , \label{eq:dX-jumps} \\
\dd Y_t^{(p_i)}
	&=	\ee^{\psi(\pb_i) (T-t)} S_{t-}^{\pb_i} \dd P_t^{(p_i)} -  \ee^{\psi(p_i) (T-t)} S_{t-}^{p_i} \dd P_t^{(\pb_i)} , \label{eq:dY-def} 
\end{align}
where $\Del$ and $\Gam^{(p_1)}, \ldots, \Gam^{(p_n)}$ are given by
\begin{align}
\[ \begin{array}{c}
\Gam_{t-}^{(p_1)} \\
\vdots \\
\Gam_{t-}^{(p_n)} \\
\Del_{t-}
\end{array} \] 
	&=	H_{t-}^{-1}
\[ \begin{array}{c}
\phi( S_{t-} \ee^{p z_1})  - \phi(S_{t-}) \\
\vdots \\
\phi( S_{t-} \ee^{p z_n})  - \phi(S_{t-}) \\
\sig_{t-} S_{t-} \phi'(S_{t-}) 
\end{array} \] . \label{eq:coeffs}
\end{align}
Then the portfolio $X$ 
replicates the perpetual payoff.  That is, the following holds
\begin{align}
X_t 
	&=	\phi(S_t) , \label{eq:replication}
\end{align}
for all $0 \leq t \leq T$.
\end{theorem}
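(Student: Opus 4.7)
The plan is to show $\dd X_t = \dd \phi(S_t)$ on $[0,T]$; combined with $X_0 = \phi(S_0)$, this yields \eqref{eq:replication}. Applying It\^o's formula for jump processes to $\phi(S_t)$ and identifying the $\dd t$ drift with $F_t$ via \eqref{eq:F-3} gives
\begin{align}
\dd \phi(S_t) = \sig_{t-} S_{t-} \phi'(S_{t-}) \dd \Wt_t + \int \Big( \phi(S_{t-} \ee^z) - \phi(S_{t-}) \Big) \Nt(\dd t, \dd z) + F_t \dd t .
\end{align}
Since \eqref{eq:dX-jumps} also contains the additive term $F_t \dd t$, it remains to match the coefficient of $\dd \Wt_t$ and each jump-at-$z_j$ coefficient of $\dd X_t$ with those of $\dd \phi(S_t)$. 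Because $\nu$ is a Dirac comb supported on $\{z_1, \ldots, z_n\}$, this is an $(n+1)$-dimensional linear system in the unknowns $\Gam^{(p_1)}, \ldots, \Gam^{(p_n)}, \Del$, which I would identify with the system \eqref{eq:coeffs}.

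The substantive work is to identify the $\dd \Wt_t$ and jump-at-$z_j$ coefficients of $\dd Y_t^{(p_i)}$ with $H_{t-}^{(n+1,i)}$ and $H_{t-}^{(j,i)}$, and in particular to show that $\dd Y_t^{(p_i)}$ has \emph{no} martingale component orthogonal to $(\Wt,\Nt)$. For this I would first derive a closed form for $P_t^{(p)} = \Ebt[S_T^p \mid \Fc_t]$: by \eqref{eq:sigma-assumptions}, conditioning on the path of $\sig$ and using the moment formulas for a Gaussian integral and a compound Poisson sum gives
\begin{align}
P_t^{(p)} = S_t^p \, \ee^{\psi(p)(T-t)} M_t^{(p)} , \qquad M_t^{(p)} := \Ebt\Big[ \exp\Big( \tfrac{p(p-1)}{2} \int_t^T \sig_s^2 \dd s \Big) \,\Big|\, \Fc_t \Big] ,
\end{align}
where $M^{(p)}$ is a functional of $\sig$ alone. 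Setting $\widehat{M}_t^{(p)} := \ee^{\frac{p(p-1)}{2} \int_0^t \sig_s^2 \dd s} M_t^{(p)}$, which is a genuine martingale driven by $\sig$ only, the product rule combined with the $\Pbt$-martingale property of $P^{(p)}$ yields
\begin{align}
\dd P_t^{(p)} = p \sig_{t-} P_{t-}^{(p)} \dd \Wt_t + P_{t-}^{(p)} \int ( \ee^{pz} - 1 ) \Nt(\dd t, \dd z) + S_{t-}^p \ee^{\psi(p)(T-t)} \ee^{- \frac{p(p-1)}{2} \int_0^t \sig_s^2 \dd s} \dd \widehat{M}_t^{(p)} ,
\end{align}
with the last term orthogonal to $(\Wt,\Nt)$ by \eqref{eq:sigma-assumptions}.

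The key symmetry is $p(p-1) = \pb(\pb-1)$ with $\pb = 1-p$, which forces $M^{(p)} = M^{(\pb)}$ and hence $\widehat{M}^{(p)} = \widehat{M}^{(\pb)}$. Plugging the decomposition above into $\dd Y_t^{(p)} = \ee^{\psi(\pb)(T-t)} S_{t-}^{\pb} \dd P_t^{(p)} - \ee^{\psi(p)(T-t)} S_{t-}^{p} \dd P_t^{(\pb)}$, both ``vol-of-vol'' contributions reduce to the common expression $S_{t-} \, \ee^{( \psi(p) + \psi(\pb))(T-t)} \ee^{- \frac{p(p-1)}{2} \int_0^t \sig_s^2 \dd s} \dd \widehat{M}_t^{(p)}$ and therefore cancel. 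Reading off the surviving $\dd \Wt_t$ and $\Nt(\dd t, \{z_j\})$ coefficients of $\dd Y_t^{(p_i)}$ reproduces $H_{t-}^{(n+1,i)}$ and $H_{t-}^{(j,i)}$ exactly, while for $\dd S_t$ one directly recovers $H_{t-}^{(n+1,n+1)} = \sig_{t-} S_{t-}$ and $H_{t-}^{(j,n+1)} = S_{t-}(\ee^{z_j} - 1)$. Invertibility of $H_{t-}$ then uniquely determines $(\Gam^{(p_1)}, \ldots, \Gam^{(p_n)}, \Del)$ via \eqref{eq:coeffs}, so $\dd X_t = \dd \phi(S_t)$ by construction and \eqref{eq:replication} follows.

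The main obstacle is the vol-of-vol cancellation in $\dd Y^{(p)}$, which hinges critically on both the independence hypothesis \eqref{eq:sigma-assumptions} and the symmetry $p \leftrightarrow \pb$ under the map $p \mapsto p(p-1)$; the remaining steps (It\^o's expansions, the drift-matching argument that pins down the martingale part of $\dd P^{(p)}$, and the final linear algebra to invert $H$) are routine.
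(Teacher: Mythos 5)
Your proposal is correct and follows essentially the same route as the paper: condition on the path of $\sig$ to get $P_t^{(p)} = S_t^p \ee^{\psi(p)(T-t)}\,\Ebt_t \exp(\tfrac{1}{2}(p^2-p)\int_t^T \sig_s^2\,\dd s)$, use the martingale property of $P^{(p)}$ to pin down its semimartingale decomposition, exploit the symmetry $p(p-1)=\pb(\pb-1)$ so that the vol-of-vol martingale terms cancel in $Y^{(p)}$, and then solve the resulting $(n+1)\times(n+1)$ linear system via $H^{-1}$. Your explicit construction of the auxiliary martingale $\widehat{M}^{(p)}$ is a slightly cleaner bookkeeping device than the paper's ``$(\ldots)\,\dd t + \dd \Mt_t^{(p)}$'' shorthand, but the argument is the same.
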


\begin{proof}
We begin by computing the dynamics of $\phi(S)$.  Using \eqref{eq:dS-jumps-3} and It\^o's Lemma we have
\begin{align}
\dd \phi(S_t)
	&=	\tfrac{1}{2} \sig_{t-}^2 S_{t-}^2 \phi''(S_{t-}) \dd t + \sig_{t-} S_{t-} \phi'(S_{t-}) \dd \Wt_t \\ &\quad
			+	\int \Big( \phi(S_{t-}\ee^z) - \phi(S_{t-}) - S_{t-}(\ee^z-1)\phi'(S_{t-}) \Big) \nu(\dd s) \dd t \\ &\quad
			+ \int \Big( \phi(S_{t-}\ee^z) - \phi(S_{t-}) \Big) \Nt(\dd t, \dd z) . \label{eq:dSp-3}
\end{align}
Next, it will be helpful to introduce $Z = \log S$.  We will separate $Z$ into a \textit{continuous} component $Z^c$ and a \textit{jump} component $Z^j$. Using It\^o's Lemma, we have
\begin{align}
\dd Z_t
	&=	\dd Z_t^c + \dd Z_t^j , &
\dd Z_t^c
	&=	-\tfrac{1}{2} \sig_{t-}^2 \dd t + \sig_{-t} \dd \Wt_t , &
\dd Z_t^j
	&=	- \int ( \ee^z - 1 - z ) \nu( \dd z) \dd t + \int z \Nt(\dd t, \dd z) .
\end{align}
Note that $Z_T^c-Z_t^c$ is normally distributed conditional on the path of $\sig$ and that $Z^j$ is a L\'evy process with characteristic exponent $\psi(\ii \, \cdot \,)$.  Thus, conditioning on the path of $\sig$ and using the L\'evy-Kintchine formula, we have
\begin{align}
\Ebt_t \ee^{p(Z_T^c - Z_t^c)}
	&=	\Ebt_t \ee^{\tfrac{1}{2}(p^2-p)(\<Z^c\>_T-\<Z^c\>_t ) } , &
\<Z^c\>_T-\<Z^c\>_t
	&=	\int_t^T \sig_s^2 \dd s ,  &
\Ebt_t \ee^{p(Z_T^j - Z_t^j)}
	&=	\ee^{(T-t) \psi(p)} ,  \label{eq:Zc-char}
\end{align}
where we have introduced the short-hand notation $\Ebt_t \, \cdot \, := \Ebt( \, \cdot \, | \Fc_t )$.  Now, 
using \eqref{eq:Zc-char} 
as well as $Z^c \ind Z^j$, we find that the value of a European power option satisfies
\begin{align}
P_t^{(p)}
	&=	\Ebt_t S_T^p
	=		\ee^{p Z_t} \Ebt_t \ee^{p(Z_T^c-Z_t^c)} \Ebt_t \ee^{p(Z_T^j-Z_t^j)} 
	=		S_t^p \ee^{\psi(p) (T-t)} \Ebt_t \ee^{\tfrac{1}{2}(p^2-p) (\<Z^c\>_T-\<Z^c\>_t )} . 
\end{align}
Next, observe that
\begin{align}
\dd \Ebt_t \ee^{\tfrac{1}{2}(p^2-p) (\<Z^c\>_T-\<Z^c\>_t )}
	&=	( \ldots ) \dd t + \dd \Mt_t^{(p)} ,
\end{align}
where $\Mt^{(p)}$ is a $(\Pbt,\Fb)$-martingale, which is independent of $\Wt$ and $\Nt$.  Thus, using the fact that $P^{(p)}$ is a $(\Pbt,\Fb)$-martingale, and thus $\dd P_t^{(p)}$ has no $\dd t$ terms, we have
\begin{align}
\dd P_t^{(p)}
	&=	\ee^{\psi(p) (T-t)} \Ebt_{t-} \ee^{\tfrac{1}{2}(p^2-p) (\<Z^c\>_T-\<Z^c\>_t )} \dd S_t^p  
			+ S_{t-}^p \ee^{\psi(p) (T-t)} \dd \Ebt_t \ee^{\tfrac{1}{2}(p^2-p) (\<Z^c\>_T-\<Z^c\>_t)} \\ & \quad
			+ S_{t-}^p \Ebt_t \ee^{\tfrac{1}{2}(p^2-p) (\<Z^c\>_T-\<Z^c\>_t)} \dd \ee^{\psi(p) (T-t)} \\
	&=	\ee^{\psi(p) (T-t)} \Ebt_{t-} \ee^{\tfrac{1}{2}(p^2-p) (\<Z^c\>_T-\<Z^c\>_t)} 
			\Big( p \sig_{t-} S_{t-}^{p} \dd \Wt_t + \int S_{t-}^{p} ( \ee^{p z} - 1 ) \Nt(\dd t, \dd z) \Big) \\&\quad
			+ S_{t-}^p \ee^{\psi(p) (T-t)} \dd \Mt^{(p)} \\
	&=	p \sig_{t-} P_{t-}^{(p)} \dd \Wt_t + \int P_{t-}^{(p)} ( \ee^{p z} - 1 ) \Nt(\dd t, \dd z) 
			+ \ee^{\psi(p) (T-t)}  S_{t-}^p \dd \Mt_t^{(p)} . \label{eq:dPp}
\end{align}
Now, defining $\pb := 1-p$ and noting that $p^2 - p = \pb^2 - \pb$, we have $\Mt^{(p)} = \Mt^{(\pb)}$.  Therefore, we have from \eqref{eq:dPp} that
\begin{align}
\dd P_t^{(\pb)}
	&=	\pb \sig_{t-} P_{t-}^{(\pb)} \dd \Wt_t + \int P_{t-}^{(\pb)} ( \ee^{\pb z} - 1 ) \Nt(\dd t, \dd z) 
			+ \ee^{\psi(\pb) (T-t)} S_{t-}^{\pb} \dd \Mt_t^{(p)} . \label{eq:dPpbar}
\end{align}
Thus, using \eqref{eq:dPp} and \eqref{eq:dPpbar}, the process $Y^{(p)}$, defined in \eqref{eq:dY-def}, is a self-financing portfolio that satisfies
\begin{align}
\dd Y_t^{(p)}
	&=	\ee^{\psi(\pb) (T-t)} S_{t-}^{\pb} \Big( p \sig_{t-} P_{t-}^{(p)} \dd \Wt_t + \int P_{t-}^{(p)} ( \ee^{p z} - 1 ) \Nt(\dd t, \dd z) \Big) \\ & \quad
			- \ee^{\psi(p) (T-t)} S_{t-}^p \Big( \pb \sig_{t-} P_{t-}^{(\pb)} \dd \Wt_t + \int P_{t-}^{(\pb)} ( \ee^{\pb z} - 1 ) \Nt(\dd t, \dd z) \Big) . \label{eq:dYp}
\end{align}
Now, we wish to create a self-financing portfolio whose value $X$ satisfies \eqref{eq:replication}.
As there are at least $(n+1)$ sources of uncertainty (due to the Brownian motion $\Wt$ and the $n$ possible jump sizes $\Del Z_t^j \in \{z_1, z_2, \ldots, z_n\}$) the portfolio will need at least $(n+1)$ hedging assets; we will use 
the underlying $S$ as well as ``shares'' of $Y^{(p_i)}$ for $i \in \{1,2,\ldots,n\}$.
Thus, the dynamics of $X$ are of the form \eqref{eq:dX-jumps}, 
where, for the moment, the processes $\Del$ and $\Gam^{(p_1)}, \ldots, \Gam^{(p_n)}$ are unknown.
Using \eqref{eq:dS-jumps-3}, \eqref{eq:dX-jumps} and \eqref{eq:dYp}, we have
\begin{align}
\dd X_t 
	&=	\Del_{t-} \sig_{t-} S_{t-} \dd \Wt_t + \sum_{i=1}^n \Gam_{t-}^{(p_i)} 
			\sig_{t-} \Big( p_i \ee^{\psi(\pb_i) (T-t)} P_{t-}^{(p_i)} S_{t-}^{\pb_i}  - \pb_i \ee^{\psi(p_i) (T-t)} P_{t-}^{(\pb_i)} S_{t-}^{p_i} \Big) \dd \Wt_t \\ &\quad
			+ \int \Del_{t-} S_{t-} (\ee^z - 1 ) \Nt(\dd t, \dd z) +  \\ & \quad
			+ \int \sum_{i=1}^n \Gam_{t-}^{(p_i)} \Big( 
			\ee^{\psi(\pb_i) (T-t)} P_{t-}^{(p_i)} S_{t-}^{\pb_i} ( \ee^{p_i z} - 1 ) - \ee^{\psi(p_i) (T-t)} P_{t-}^{(\pb_i)} S_{t-}^{p_i}  ( \ee^{\pb_i z} - 1 )
			\Big) \Nt(\dd t, \dd z)  
			+ F_{t-} \dd t .  \label{eq:dX-3}
\end{align}
Equation \eqref{eq:replication} will be satisfied if and only if the $\dd t$, $\dd \Wt_t$ and $\Nt(\dd t, \dd z)$ terms in \eqref{eq:dSp-3} and \eqref{eq:dX-3} are equal.  As such, the funding rate $F$ must be given by \eqref{eq:F-3} and the processes $\Del$ and $\Gam^{(p_1)}, \ldots, \Gam^{(p_n)}$ must satisfy
\begin{align}
\sig_{t-} S_{t-} \phi'(S_{t-}) 
	&=	\Del_{t-} \sig_{t-} S_{t-} + \sum_{i=1}^n \Gam_{t-}^{(p_i)} 
			\sig_{t-} \Big( p_i \ee^{\psi(\pb_i) (T-t)} P_{t-}^{(p_i)} S_{t-}^{\pb_i}  - \pb_i \ee^{\psi(p_i) (T-t)} P_{t-}^{(\pb_i)} S_{t-}^{p_i} \Big)  \\
\phi(S_{t-}\ee^{z_j}) - \phi(S_{t-})
	&= \Del_{t-} S_{t-} (\ee^{z_j} - 1 ) \\ & \quad 
			+ \sum_{i=1}^n \Gam_{t-}^{(p_i)} \Big( 
			\ee^{\psi(\pb_i) (T-t)} P_{t-}^{(p_i)} S_{t-}^{\pb_i} ( \ee^{p_i z_j} - 1 ) - \ee^{\psi(p_i) (T-t)} P_{t-}^{(\pb_i)} S_{t-}^{p_i}  ( \ee^{\pb_i z_j} - 1 )
			\Big) ,
\end{align}
where the last equation must hold for all $z_j \in \{z_1, z_2, \ldots, z_n\}$.  In matrix form, we have
\begin{align}
\[ \begin{array}{c}
\phi(S_{t-}\ee^{z_1}) - \phi(S_{t-}) \\
\vdots \\
\phi(S_{t-}\ee^{z_n}) - \phi(S_{t-}) \\
\sig_{t-} S_{t-} \phi'(S_{t-})
\end{array} \]
	&=
H_{t-}
\[ \begin{array}{c}
\Gam_{t-}^{(p_1)} \\
\vdots \\
\Gam_{t-}^{(p_n)} \\
\Del_{t-}
\end{array} \] ,
\end{align}
where the entries of $H$ are given by \eqref{eq:H}.
Using the fact that $H$ is invertible, we find that \eqref{eq:replication} will hold if $\Del$ and $\Gam^{(p_1)}, \ldots, \Gam^{(p_n)}$ are given by \eqref{eq:coeffs}.
\end{proof}

\begin{remark}
The replication strategy described in Theorem \ref{thm:replication} works only up until the maturity date $T$ of the European power contracts.  However, at time $T$ one can continue the replication strategy by trading European power contracts with a maturity date $\overline{T} > T$.
\end{remark}

\begin{remark}
Note that $S_{t-}$, $P_{t-}^{(p_i)}$ for all $i$ and $\sig_{t-} = \frac{\dd}{\dd t} \<Z^c\>_t$ are observable.  Thus, no assumptions about the dynamics of the volatility process $\sig$ are needed for the replication strategy to work.  We do, however, require knowledge of the possible jump-sizes $\{ z_1, z_2, \ldots, z_n\}$ and jump intensities under the pricing measure $\{\lam_1, \lam_2, \ldots, \lam_n \}$ as these appear in $\nu$ and $\psi$.
\end{remark}



\bibliography{references}

\begin{thebibliography}{11}
\providecommand{\natexlab}[1]{#1}
\providecommand{\url}[1]{\texttt{#1}}
\expandafter\ifx\csname urlstyle\endcsname\relax
  \providecommand{\doi}[1]{doi: #1}\else
  \providecommand{\doi}{doi: \begingroup \urlstyle{rm}\Url}\fi

\bibitem[Alexander et~al.(2020)Alexander, Choi, Park, and
  Sohn]{alexander2020bitmex}
C.~Alexander, J.~Choi, H.~Park, and S.~Sohn.
\newblock Bitmex bitcoin derivatives: Price discovery, informational
  efficiency, and hedging effectiveness.
\newblock \emph{Journal of Futures Markets}, 40\penalty0 (1):\penalty0 23--43,
  2020.

\bibitem[Angeris and Chitra(2020)]{AC20}
G.~Angeris and T.~Chitra.
\newblock Improved price oracles: Constant function market makers.
\newblock In \emph{Proceedings of the 2nd ACM Conference on Advances in
  Financial Technologies}, AFT '20, page 80–91, New York, NY, USA, 2020.
  Association for Computing Machinery.
\newblock ISBN 9781450381390.

\bibitem[Carr and Madan(2001)]{carr2001towards}
P.~Carr and D.~Madan.
\newblock Towards a theory of volatility trading.
\newblock \emph{Option Pricing, Interest Rates and Risk Management, Handbooks
  in Mathematical Finance}, 22\penalty0 (7):\penalty0 458--476, 2001.

\bibitem[CoinGecko(2022)]{coingeckoperp}
CoinGecko.
\newblock Cryptocurrency derivatives (perpetual contract) by volume, 2022.
\newblock URL
  \url{https://urldefense.com/v3/__https://www.coingecko.com/en/derivatives__;!!CX2e5RWqcItE!rVO_khNmJf1jE4zB1F_vuwG9KMLnrJEidUwh0VKjoHPVPIepeeTSTr77jfMJLJHBFVwMpYnF2DMgaKF4aw4$}.

\bibitem[Evans(2020)]{evans2020liquidity}
A.~Evans.
\newblock Liquidity provider returns in geometric mean markets.
\newblock \emph{arXiv preprint arXiv:2006.08806}, 2020.

\bibitem[Leung and Santoli(2016)]{leung2016leveraged}
T.~Leung and M.~Santoli.
\newblock \emph{Leveraged Exchange-Traded Funds: Price Dynamics and Options
  Valuation}.
\newblock Springer International Publishing, 2016.

\bibitem[Lewenberg et~al.(2015)Lewenberg, Bachrach, Sompolinsky, Zohar, and
  Rosenschein]{lewenberg2015bitcoin}
Y.~Lewenberg, Y.~Bachrach, Y.~Sompolinsky, A.~Zohar, and J.~S. Rosenschein.
\newblock Bitcoin mining pools: A cooperative game theoretic analysis.
\newblock In \emph{Proceedings of the 2015 international conference on
  autonomous agents and multiagent systems}, pages 919--927. Citeseer, 2015.

\bibitem[llllvvuu(2021)]{llllvvuu2021}
llllvvuu.
\newblock Raw moment derivatives, 2021.
\newblock URL
  \url{https://urldefense.com/v3/__https://llllvvuu.dev/blog/raw-moments__;!!CX2e5RWqcItE!rVO_khNmJf1jE4zB1F_vuwG9KMLnrJEidUwh0VKjoHPVPIepeeTSTr77jfMJLJHBFVwMpYnF2DMgIlDFowY$}.

\bibitem[Opyn(2021)]{squeeth2021}
Opyn.
\newblock Squeeth, 2021.
\newblock URL
  \url{https://urldefense.com/v3/__https://squeeth.opyn.co/__;!!CX2e5RWqcItE!rVO_khNmJf1jE4zB1F_vuwG9KMLnrJEidUwh0VKjoHPVPIepeeTSTr77jfMJLJHBFVwMpYnF2DMgHZCTfxE$}.

\bibitem[Shiller(1993)]{shiller1993}
R.~J. Shiller.
\newblock Measuring asset values for cash settlement in derivative markets:
  Hedonic repeated measures indices and perpetual futures.
\newblock \emph{The Journal of Finance}, 48\penalty0 (3):\penalty0 911--931,
  July 1993.
\newblock ISSN 00221082.
\newblock \doi{10.1111/j.1540-6261.1993.tb04024.x}.

\bibitem[White et~al.(2021)White, Robinson, Koticha, Leone, Gauba, and
  Krishnan]{white2021}
D.~White, D.~Robinson, Z.~Koticha, A.~Leone, A.~Gauba, and A.~Krishnan.
\newblock Power perpetuals, Aug. 2021.
\newblock URL
  \url{https://urldefense.com/v3/__https://www.paradigm.xyz/2021/08/power-perpetuals__;!!CX2e5RWqcItE!rVO_khNmJf1jE4zB1F_vuwG9KMLnrJEidUwh0VKjoHPVPIepeeTSTr77jfMJLJHBFVwMpYnF2DMglJFe8lI$}.

\end{thebibliography}


\end{document}